\newcommand*{\mailto}[1]{\href{mailto:#1}{\nolinkurl{#1}}}
\newtheorem{corollary}{Corollary}[section]
\newtheorem{theorem}{Theorem}[section]
\newtheorem{remark}{Remark}[section]
\numberwithin{equation}{section}
\begin{document}

\thispagestyle{empty}

\noindent{\large\bf Determination of matrix exponentially decreasing potential from scattering matrix}
\\

\noindent {\bf  Xiao-Chuan Xu}\footnote{Department of Applied
Mathematics, School of Science, Nanjing University of Science and Technology, Nanjing, 210094, Jiangsu,
People's Republic of China, {\it Email:
xcxu@njust.edu.cn}}
{\bf and Chuan-Fu Yang}\footnote{Department of Applied
Mathematics, School of Science, Nanjing University of Science and Technology, Nanjing, 210094, Jiangsu,
People's Republic of China, {\it Email: chuanfuyang@njust.edu.cn}}
\\

\noindent{\bf Abstract.}
{(i) For the matrix Schr\"{o}dinger operator on the half line, it is shown that if the potential exponentially decreases fast enough then only the scattering matrix uniquely determines the self-adjoint potential and the boundary condition. (ii) For the matrix Schr\"{o}dinger operator on the full line, it is shown that if the potential exponentially decreases fast enough then  the scattering matrix (or equivalently, the transmission coefficient and reflection coefficient) uniquely determine the potential. If the potential vanishes on $(-\infty,0)$ then only the left reflection coefficient uniquely determine the potential.}

\medskip
\noindent {\it Keywords:}{
Matrix Schr\"{o}dinger operator; Inverse scattering problem; exponentially decreasing potential}

\medskip
\noindent{\it 2010 Mathematics Subject Classification:} 34A55, 34L25, 34L40

\section{Introduction and main result}
Consider the matrix Schr\"{o}dinger equation
\begin{equation}\label{1}
-\Psi''(x)+V(x)\Psi(x)=k^2\Psi(x),\quad x\in I,
\end{equation}
where $I$ denotes $(-\infty,\infty)$ or $(0,\infty)$, $\Psi$ is either an $n\times n$ matrix-valued function or a column vector-valued function with $n$ components, and
the matrix potential $V(x)$ satisfies
\begin{equation}\label{2}
V(x)^\dag=V(x)\quad \text{and}\quad
\sum_{1\le l ,s\le n}\int_I(1+|x|)| V_{ls}(x)|dx<\infty
\end{equation}
Here the dagger "$\dag$" denotes the matrix adjoint (complex conjugate and matrix transpose).

The matrix Schr\"{o}dinger equation (\ref{1}) has been studied by many authors \cite{AM,AW,AKW,MH1,MH3,XY}, which is connected with scattering in quantum mechanics involving particles of internal structures as spins, scattering on graphs and quantum wires (see \cite{KS1,KS2,PK} and the references therein), and has important applications in matrix KdV and Boomeron equations \cite{CD}.

For the self-adjoint matrix Schr\"{o}dinger operator on the half line, in order to recover the potential and the parameter in the boundary condition, in general, it is necessary to prescribe the scattering data, which consists of scattering matrix and bound state data (eigenvalues and normalization matrices) \cite{MH3,XY}.
 In the scalar case ($n=1$ in (\ref{1})) with Dirichlet boundary condition, there is an interesting result: if the potential is compactly supported, then only the scattering matrix can uniquely determine the potential \cite{RM}. In this paper, we generalize this result to the matrix case with the general self-adjoint boundary condition.

  For the full line case, it is also known  \cite{NB1} that in order to recover the potential one has to specify the right (or left) reflection coefficient and bound state data (eigenvalues and wight matrices). We shall prove that if the potential exponentially decreases fast enough then the transmission coefficient and right (or left) reflection coefficient uniquely determine the potential. It was shown in \cite{NB,FY} that the Weyl matrix uniquely determine the potential on the half line. We will use this result to show that if the potential vanishes on $(-\infty,0)$ then only the left reflection coefficient uniquely determine the potential.

  These results are of interest because they indicate that in some cases the self-adjoint matrix potential can be recovered uniquely without knowing the normalization matrices or weight matrices, which, in general, have no physical meaning. Moreover, the scattering matrix are much more amenable to direct measurement than
bound state data in scattering
experiments, and it is effective in numerical reconstructions \cite{RS}.

\section{Inverse scattering on the half line}
In this section we
consider matrix Schr\"{o}dinger equation (\ref{1}) on the half line (i.e., $I=\mathbb{R}^+:=(0,\infty)$) with the general self-adjoint boundary condition \cite{AKW,MH1,MH3}
\begin{equation}\label{3}
 -B^\dag \Psi(0)+A^\dag \Psi'(0)=0_n,
\end{equation}
where $0_n$ denotes  zero matrix or  zero vector, and
\begin{equation}\label{4}
A=\frac{1}{2}\left(U+I_n\right)\quad\text{and}\quad B=\frac{i}{2}\left(U-I_n\right),
\end{equation}
the matrix $U$ is unitary, and $I_n$ denotes the  $n \times n$ identity matrix.
Denote by $L(V,U)$ the problem (\ref{1}) and (\ref{3}).

Let us recall that the \emph{eigenvalue} of the problem $L(V,U)$ is the $k^2$-value for which (\ref{1}) has a nonzero column vector solution $\Psi\in L^2(\mathbb{R}^+)$ satisfying (\ref{3}). It is shown in \cite{AM,AW,MH3} that the eigenvalues of the problem $L(V,U)$ correspond to the zeros of determinant of the Jost matrix $J(k)$ on the positive half imaginary axis $i\mathbb{R}^+$.  Here the Jost matrix $J(k)$ is defined as \cite{AM,AKW,AW,MH1,MH3}
\begin{equation}\label{12}
J(k):=f_+(-\bar{k},0)^\dagger B-f_+'(-\bar{k},0)^\dagger A,\quad k\in \overline{\mathbb{C}}^+:=\{k:{\rm Im}k\ge0\},
\end{equation}
where $\bar{k}$ means the conjugate of the number $k$, and the matrix-valued function $f(k,x)$ is the Jost solution to (\ref{1}) satisfying the
integral equation
\begin{equation}\label{jost}
  f_+(k,x)=e^{ikx}I_n+\int_x^\infty\frac{\sin k(t-x)}{k}V(t)f_+(k,t)dt,\quad k\in \overline{\mathbb{C}}^+.
\end{equation}
Denote by $\{ik_j\}_{j=1}^N$ the zeros of $\det J(k)$ on $i\mathbb{R}^+$ with $k_j<k_{j+1}$.
Define the scattering matrix \cite{AKW,AW,MH1,MH3,XY}
\begin{equation}\label{15}
S(k)=    -J(-k)J(k)^{-1},\;\;k\in\mathbb{R},
\end{equation}
and the normalization matrices  \cite{MH3,XY}
\begin{equation}\label{34}
C_j:=P_j \left(P_j\int_0^\infty f_+(ik_j,x)^\dagger f_+(ik_j,x)dx P_j+I_n-P_j\right)^{-\frac{1}{2}},
\end{equation}
where $P_j$ is the orthogonal projection onto $\ker J(ik_j)^\dagger$.  The data
$$\mathcal{S}:=\{S(k),k_j,C_j\}_{k\in \mathbb{R};j=\overline{1,N}}$$ is called \emph{scattering data}.

The main result in this section is as follows.

\begin{theorem}
Assume that the self-adjoint potential $V(x)$ satisfies
 \begin{equation}\label{hl}
   \sum_{1\le l ,s\le n}\int_0^\infty| V_{ls}(x)|e^{2\gamma x}dx<\infty
 \end{equation}
 with $\gamma>k_N$. Then the scattering matrix $S(k)$ uniquely determines the potential $V(x)$ and the unitary matrix $U$ in the boundary condition.
\end{theorem}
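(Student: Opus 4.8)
The plan is to reduce the claim to the fact, recalled in the introduction and established in \cite{MH3,XY}, that the full scattering data $\mathcal{S}=\{S(k),k_j,C_j\}$ uniquely determines the pair $(V,U)$. Thus it suffices to show that, under the exponential decay hypothesis \eqref{hl} with $\gamma>k_N$, the number $N$, the numbers $\{k_j\}_{j=1}^N$, and the normalization matrices $\{C_j\}_{j=1}^N$ are themselves already determined by the single matrix function $S(k)$, $k\in\mathbb{R}$. The decisive gain from \eqref{hl} is analytic: for a general potential obeying \eqref{2} the Jost solution $f_+(k,x)$ and hence $J(k)$ are analytic only in the open upper half-plane, so that $S(k)=-J(-k)J(k)^{-1}$, which couples $J(k)$ with $J(-k)$, cannot be continued off $\mathbb{R}$; the exponential weight removes exactly this obstruction.

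First I would establish the enhanced analyticity. Writing $m_+(k,x)=e^{-ikx}f_+(k,x)$ and iterating \eqref{jost}, the weight $e^{2\gamma x}$ in \eqref{hl} forces the resulting Neumann series to converge locally uniformly for $\operatorname{Im}k>-\gamma$; consequently $f_+(k,x)$, $f_+'(k,0)$, and therefore $J(k)$, extend to functions analytic in the half-plane $\operatorname{Im}k>-\gamma$. Replacing $k$ by $-k$ shows $J(-k)$ is analytic for $\operatorname{Im}k<\gamma$, so $S(k)=-J(-k)J(k)^{-1}$ is meromorphic in the strip $|\operatorname{Im}k|<\gamma$, with singularities only where $\det J(k)=0$. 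Since $S$ is real-analytic on $\mathbb{R}\setminus\{0\}$, its meromorphic continuation to this strip is unique, hence completely determined by the prescribed values $S(k)$, $k\in\mathbb{R}$.

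Next I would read off the discrete data. In the open upper half-plane the zeros of $\det J(k)$ are confined to the imaginary axis and coincide with the points $ik_j$, because for a self-adjoint potential the eigenvalues are real and negative and correspond, per \cite{AM,AW,MH3}, to $k\in i\mathbb{R}^+$. As $\gamma>k_N=\max_j k_j$, every $ik_j$ falls inside the strip $0<\operatorname{Im}k<\gamma$, while $J(-k)$ is analytic and invertible there. Therefore the poles of the continued $S(k)$ in this strip are exactly the points $ik_j$, and $S(k)$ alone determines $N$ and each $k_j$.

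The main obstacle is recovering the normalization matrices $C_j$, and here I would compute the principal part of $S$ at $ik_j$. From $J(k)J(k)^{-1}=I_n=J(k)^{-1}J(k)$ one gets $\operatorname{Res}_{ik_j}J(k)^{-1}=N_j$ with $J(ik_j)N_j=0$ and $N_jJ(ik_j)=0$, so that $N_j=Q_jN_jP_j$, where $P_j$ is the projection onto $\ker J(ik_j)^\dagger$ of \eqref{34} and $Q_j$ is that onto $\ker J(ik_j)$; in particular $\operatorname{Res}_{ik_j}S(k)=-J(-ik_j)N_j$ determines $P_j$. It then remains to convert $N_j$ into the $L^2$-normalization $\int_0^\infty f_+(ik_j,x)^\dagger f_+(ik_j,x)\,dx$ appearing in \eqref{34}. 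This is accomplished by a Wronskian identity: differentiating \eqref{1} in $k$ and integrating shows that the norm matrix of the bound states equals an explicit combination of $J(ik_j)$ and its $k$-derivative $\dot J(ik_j)$, which is precisely the datum governing $N_j$. Combining these relations expresses $C_j$ through the residue and local expansion of the continued $S(k)$; once $\{k_j,C_j\}$ are reconstructed from $S(k)$ in this way, the full data $\mathcal{S}$ is in hand and the uniqueness theorem of \cite{MH3,XY} yields $V$ and $U$. The delicate points are the non-commutativity in the matrix Wronskian computation and the possibility of multiple zeros of $\det J$, where the simple-pole residue must be replaced by the full principal part at $ik_j$.
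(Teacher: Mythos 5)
Your overall strategy is the paper's: extend $J(k)$, hence $S(k)$, meromorphically into the strip $|\operatorname{Im}k|<\gamma$, read the eigenvalues off the poles on $i\mathbb{R}^+$, recover the normalization matrices from the local behaviour of $S$ at $ik_j$, and then invoke the uniqueness theorem of \cite{MH3,XY} for the full scattering data $\mathcal{S}$. Your continuation step and the identification of the candidate poles are sound. However, your assertion that $J(-k)$ is ``analytic and invertible'' for $0<\operatorname{Im}k<\gamma$ is a genuine gap: for such $k$ the point $-k$ lies in the lower half-plane region $-\gamma<\operatorname{Im}k<0$ reached only by the new analytic continuation, where the standard results locating the zeros of $\det J$ (valid in $\overline{\mathbb{C}}^+$ only) say nothing. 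What is actually needed is that the simple pole of $J(k)^{-1}$ at $ik_j$ is not cancelled in the product $-J(-k)J(k)^{-1}$, i.e.\ $J(-ik_j)N_{-,k_j}\ne 0_n$ in the notation of \eqref{13}. The paper does not claim invertibility either; its Step 3 proves the weaker sufficient statement $J(-ik_j)\ne 0_n$ \emph{after} the residue identity is in hand, by observing that the residue equals $iC_j^2$, which is nonzero because the $C_j$ are nonzero. Without some such non-cancellation argument, your claim that the poles of the continued $S$ in the upper strip are exactly $\{ik_j\}_{j=1}^N$ is incomplete.

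The second gap is at the step you yourself flag as the main obstacle: the conversion of the pole data into $C_j$. You propose to extract $P_j$ from $\operatorname{Res}_{ik_j}S(k)$ and then relate the bound-state norm matrix $\int_0^\infty f_+(ik_j,x)^\dagger f_+(ik_j,x)\,dx$ to $J(ik_j)$ and $\dot J(ik_j)$ by ``differentiating \eqref{1} in $k$ and integrating,'' but the decisive identity is asserted, not derived, and the matrix non-commutativity you mention is precisely where such derivations can fail. The paper avoids this entirely with a closed-form computation: using the known relation $2ik_j\varphi(ik_j,x)N_{-,k_j}=-f_+(ik_j,x)C_j^2$ (Eq.\ (3.19) of \cite{XY}, stated as \eqref{bnu}), the initial conditions $\varphi(k,0)=A$, $\varphi'(k,0)=B$, and the $x$-independence of the Wronskian $[f_+(\bar k,x)^\dagger;f_+(k,x)]$ evaluated at $x=\infty$ via the extended asymptotics \eqref{yyy}, it obtains directly $C_j^2=-i\operatorname{Res}_{k=ik_j}S(k)$; no $\dot J(ik_j)$, no separate determination of $P_j$, and no expansion of $S$ beyond the residue is required. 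Your concern about multiple zeros of $\det J$ is moot: by \cite{AW,MH3} the poles of $J(k)^{-1}$ at the $ik_j$ are simple, which the paper records as \eqref{13}. To complete your argument you would need to supply the missing Wronskian computation (or cite the relation from \cite{XY} or \cite{MH3}) and replace the unproven invertibility of $J(-k)$ by the non-cancellation argument above.
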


\begin{remark}
Note that the Jost matrix $J(k)$ uniquely determines the scattering matrix $S(k)$, thus the Jost matrix $J(k)$ uniquely recovers the problem $L(V,U)$ with the potential $V$ satisfying (\ref{2}). Apparently, the same result holds for the compactly supported potential.
\end{remark}

\begin{remark}
Theorem 1.1 shows that the scattering matrix $S(k)$ can determines the type of the boundary condition (i.e., Dirichlet or non-Dirichlet), whereas, it is shown in \cite{ASU} that this is not true in the scalar case. This is because that the definition of the scattering matrix in \cite{ASU} is different from (\ref{15}) here.
\end{remark}

\begin{proof}[Proof of Theorem 1.1]
It is shown in \cite{XY} that the scattering data $\mathcal{S}$ uniquely determines the potential $V(x)$ satisfying (\ref{2}) and the unitary matrix $U$ in (\ref{4}).
Obviously, the condition (\ref{hl}) implies (\ref{2}).
Thus it is sufficient to prove that the scattering matrix $S(k)$ can uniquely recover the eigenvalues $ \{-k_j^2\}_{j=1}^N$ and the normalization matrices $ \{C_j\}_{j=1}^N$.

Under the condition (\ref{2}),
using the method of successive approximations for (\ref{jost}) (see, e.g., Theorem 2.1.1 in \cite{FY1}), one can easily get that the Jost solution $f_+(k,x)$ and its derivative $f_+'(k,x)$ have analytic continuations for $k$ from $\overline{\mathbb{C}}^+$ to $\Omega_\gamma:=\{k\in\mathbb{C}:{\rm Im}k>-\gamma\}$, and satisfy the asymptotics
\begin{equation}\label{yyy}
 f_+^{(v)}(k,x)=(ik)^v e^{ikx}[1+o(1)],\; x\to+\infty,\; v=0,1,\;k\in\Omega_\gamma\setminus\{0\}.
\end{equation}
Note that the analyticities of the matrix-valued functions $f_+^{(v)}(-\bar{k},0)^\dag$ ($v=0,1$) coincide with the  analyticities of $f_+^{(v)}({k},0)$, respectively. Thus the Jost matrix $J(k)$ has an analytic continuation from $\overline{\mathbb{C}}^+$ to $\Omega_\gamma$, which implies from (\ref{15}) that the scattering matrix $S(k)$ has an analytic continuation from $\mathbb{R}$ to $\{k\in \mathbb{C}: |{\rm Im}k|<\gamma\}$. On the other hand, it is known \cite{AW,MH3} that
the inverse of the Jost matrix $J(k)^{-1}$ has simple poles at $k=ik_j$ ($j=\overline{1,N}$), namely,
\begin{equation}\label{13}
 J(k)^{-1}=\frac{N_{-,k_j}}{k-ik_j}+N_{0,k_j}+O(k-ik_j),\quad k\to ik_j\; \text{ in }\; \overline{\mathbb{C}}^+, \;j=\overline{1,N},
\end{equation}
where $N_{-,k_j}(\ne0_n)$ and $N_{0,k_j}$ are constant matrices.

\textbf{Step 1.} $S(k)$ determines $\{k_j\}_{j=1}^N$.

From (\ref{15}) and (\ref{13}), and the assumption $\gamma>k_N$, we get that the scattering matrix $S(k)$ has simple poles $\{ik_j\}_{j=1}^N$ if $J(-ik_j)\ne0_n$ for all $j=\overline{1,N}$.
Moreover, all the eigenvalues $\{-k_j^2\}_{j=1}^N$ can be found by observing  $S(k)$ when $k$ moves along the positive imaginary axis (if $k\to ik_j$ then there exists at least one element in $S(k)$ approaching $\infty$).

\textbf{Step 2.} $S(k)$ determines $\{C_j\}_{j=1}^N$.

Let $\varphi(k,x)$ be the solution to (\ref{1}) satisfying the initial condition
\begin{equation}\label{j3}
  \varphi(k,0)=A,\quad \varphi'(k,0)=B.
\end{equation}
From Eq.(3.19) in \cite{XY} (or Eq.(21) in \cite{MH3} with a minor revision), we see that
\begin{equation}\label{bnu}
 2ik_j\varphi(ik_j,x)N_{-,k_j}=-f_+(ik_j,x)C_j^2,\quad j=\overline{1,N}.
\end{equation}
By virtue of (\ref{12}), (\ref{15}) and (\ref{13}) we have
\begin{equation}\label{sox}
\lim_{k\to ik_j}(k-ik_j)S(k)=-J(-ik_j)N_{-,k_j}=f_+'(-ik_j,0)^\dag AN_{-,k_j}-f_+(-ik_j,0)^\dag BN_{-,k_j}.
\end{equation}
From Eqs.(\ref{j3}) and (\ref{bnu}), we get
\begin{equation}\label{mlzx}
AN_{-,k_j}=-\frac{1}{2ik_j}f_+(ik_j,0)C_j^2,\;\;\;BN_{-,k_j}=-\frac{1}{2ik_j}f_+'(ik_j,0)C_j^2.
\end{equation}
Substituting (\ref{mlzx}) into (\ref{sox}), we obtain
\begin{equation}\label{sox1}
\lim_{k\to ik_j}(k-ik_j)S(k)=\frac{1}{2ik_j}[f_+(-ik_j,0)^\dag f_+'(ik_j,0)-f_+'(-ik_j,0)^\dag f_+(ik_j,0)]C_j^2.
\end{equation}

Due to that the Jost solution $f_+(k,x)$ with $k\in\Omega_\gamma$ satisfies (\ref{1}) and the potential $V(x)=V(x)^\dag$, then $f_+(\overline{k},x)^\dag$ satisfies
\begin{equation}\label{5}
-f_+''(\bar{k},x)^\dag+f_+(\bar{k},x)^\dag V(x)=k^2f_+(\bar{k},x)^\dag,\quad x\ge0,\;\;{\rm Im}k\le \gamma.
\end{equation}
Using (\ref{1}) and (\ref{5}) it is easy to prove that the Wronskian
\begin{equation}\label{xcx0}
  [f_+(\bar{k},x)^\dag;f_+(k,x)]:=f_+(\bar{k},x)^\dag f_+'(k,x)-f_+'(\bar{k},x)^\dag f_+(k,x)
\end{equation}
is independent of $x$. This yields
\begin{equation*}
  [f_+(\bar{k},x)^\dag;f_+(k,x)]_{x=0}=[f_+(\bar{k},x)^\dag;f_+(k,x)]_{x=\infty},\;\;|{\rm Im}k|\le \gamma.
\end{equation*}
Letting $x\to\infty$ in (\ref{xcx0}) and using the asymptotics (\ref{yyy}) with $k=\pm ik_j$, we get
\begin{equation}\label{bgha}
[f_+(-ik_j,x)^\dag;f_+(ik_j,x)]_{x=0}=-2k_j,\quad j=\overline{1,N}.
\end{equation}
Together with (\ref{sox1}) and (\ref{bgha}), we obtain
\begin{equation}\label{jja}
\lim_{k\to ik_j}(k-ik_j)S(k)=iC_j^2,\quad j=\overline{1,N},
\end{equation}
that is,
\begin{equation}\label{jja1}
C_j^2=-i\mathop {\rm Res }\limits_{k= ik_j}S(k),\quad j=\overline{1,N}.
\end{equation}

\textbf{Step 3.}
One proves $J(-ik_j)\ne0_n$ for $j=\overline{1,N}$.

It is known \cite{MH3,XY} that the normalization matrices $C_j$ are all non-zero. It follows from (\ref{15}) and (\ref{jja}) that $J(-ik_j)\ne0_n$.

From the steps 1-3, the proof is finished.
\end{proof}

\begin{remark}
In the proof of Theorem 1.1 we actually give the reconstruction algorithm for recovering the problem $L(V,U)$ from the scattering matrix $S(k)$:

 1) find the eigenvalues by observing $ S(k)$ on $i\mathbb{R}^+$;

 2) use the eigenvalues obtained above to recover the normalization matrices by Eq.(\ref{jja1});

 3) use the Marchenko procedure (see \cite{MH3,XY}) to recover the potential $V$ and the unitary matrix $U$ in the boundary condition.
\end{remark}

\section{Inverse scattering on the full line}
In this section, let us consider the Schr\"{o}dinger equation (\ref{1}) on the real line (i.e., $I=\mathbb{R}$). Together with the Jost solution $f_+(k,x)$, we consider another Jost solution $f_-(k,x)$ which satisfies
\begin{equation}\label{jost1}
   f_-(k,x)=e^{-ikx}I_n+\int_{-\infty}^x\frac{\sin k(x-t)}{k}V(t)f_-(k,t)dt,\quad k\in \overline{\mathbb{C}}^+.
\end{equation}

Let us study the Schr\"{o}dinger equation (\ref{1}) under the assumption
 \begin{equation}\label{fll}
   \sum_{1\le l ,s\le n}\int_0^\infty| V_{ls}(x)|e^{2\gamma_1 |x|}dx<\infty,
 \end{equation}
where $\gamma_1>0$.
Similarly, under the condition (\ref{fll}), one can get that  $f_-^{(v)}(k,x)$ ($v=0,1$) have analytic continuations for $k$ from $\overline{\mathbb{C}}^+$ to $\Omega_{\gamma_1}:=\{k\in\mathbb{C}:{\rm Im}k>-\gamma_1\}$, and satisfy the asymptotics
\begin{equation}\label{fl0}
 f_-^{(v)}(k,x)=(-ik)^v e^{-ikx}[1+o(1)],\; x\to-\infty,\; v=0,1,\;k\in\Omega_{\gamma_1}\setminus\{0\}.
\end{equation}

Let us recall some physical quantities \cite{AKV,NB1}.
Denote
\begin{align}
 A(k):&=\frac{1}{2ik}[f_-(-\bar{k},x)^\dag;f_+(k,x)],\quad k\in\Omega_{\gamma_1}\setminus\{0\}, \label{fl1}\\
 B(k):&=-\frac{1}{2ik}[f_-(\bar{k},x)^\dag;f_+(k,x)],\quad |{\rm Im}k|\le \gamma_1,k\ne0,\label{fl2}\\
 C(k):&=-B(\bar{k})^\dag, \;\;D(k):=A(-\bar{k})^\dag,
\end{align}
and
\begin{equation}\label{fl3}
\left\{\begin{split}
S_-(k)&=B(k)A(k)^{-1}, \; S_+(k)=C(k)D(k)^{-1} ,\; |{\rm Im} k|\le \gamma,\\
T_+(k)&=A(k)^{-1},\;T_-(k)=D(k)^{-1},\; k\in\Omega_{\gamma_1}\setminus\{0\}.
\end{split}\right.
\end{equation}
The matrices $S_\pm(k)$ and $T_\pm(k)$ are called the \emph{reflection coefficients} and the\emph{ transmission coefficients}, respectively.
The $2n\times 2n$ matrix
\begin{equation}\label{fl4}
  \begin{bmatrix}
  T_+(k)& S_+(k)\\
  S_-(k)& T_-(k)
  \end{bmatrix}
\end{equation}
is called the \emph{scattering matrix}. The \emph{eigenvalue} of the Schr\"{o}dinger operator (\ref{1}) is the $k^2$-value for which (\ref{1}) has a nonzero column vector solution $\Psi\in L^2(\mathbb{R})$, which coincides with the zero of the determinant $\det A(k)$ on the positive imaginary axis $i\mathbb{R}^+$ \cite{NB1}. Denote, also, by $\{ik_j\}_{j=1}^N$ the zeros of $\det A(k)$ on $i\mathbb{R}^+$ with $k_j<k_{j+1}$. It is known \cite{NB1} that the matrices $T_\pm(k)$ have simple poles at $\{ik_j\}_{j=1}^N$, moreover, if we denote
\begin{equation}\label{fl5}
  R_j^\pm:=\mathop {\rm Res }\limits_{k= ik_j}T_\pm(k),
\end{equation}
then there exist positive semidefinite matrices $N_j^\pm$ such that
\begin{equation}\label{fl6}
f_\pm(ik_j,x)R_j^\mp=if_\mp(ik_j,x)N_j^\mp.
\end{equation}
Here the matrices $N_j^-$ and $N_j^+$ are called the left and right \emph{weight matrices}, respectively. The collections
\begin{equation*}
 \mathcal{ S}_-:=\{S_-(k), k_j, N_j^-\}_{k\in \mathbb{R},j=\overline{1,N}},\quad \mathcal{ S}_+:=\{S_+(k), k_j, N_j^+\}_{k\in \mathbb{R},j=\overline{1,N}}
\end{equation*}
are called the left and the right \emph{scattering data}, respectively.

The first main result in this section is as follows.

\begin{theorem}
Assume that the self-adjoint potential $V(x)$ satisfies (\ref{fll}) with $\gamma> k_N$. Then the scattering matrix (\ref{fl4}) uniquely determine the potential.
\end{theorem}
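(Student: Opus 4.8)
The plan is to reduce the statement to the known inverse result of \cite{NB1}, according to which the left scattering data $\mathcal{S}_-$ (or, symmetrically, the right data $\mathcal{S}_+$) already determines the potential uniquely; since (\ref{fll}) implies (\ref{2}), that theorem applies here. Thus it suffices to show that the scattering matrix (\ref{fl4}) recovers the three ingredients of $\mathcal{S}_-$: the reflection coefficient $S_-(k)$ (which is literally one of the blocks), the eigenvalue parameters $\{k_j\}_{j=1}^N$, and the left weight matrices $\{N_j^-\}_{j=1}^N$ (the argument for $\mathcal{S}_+$ is identical with $+$ replacing $-$).

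First I would set up the analytic framework exactly as in the proof of Theorem 2.1. Under (\ref{fll}) with the decay rate $\gamma_1>k_N$, the Jost solutions $f_\pm^{(v)}(k,x)$ continue analytically into $\Omega_{\gamma_1}$ with the asymptotics (\ref{fl0}) and (\ref{yyy}), so the Wronskian-defined matrices $A(k),B(k),C(k),D(k)$ of (\ref{fl1})--(\ref{fl2}), and hence $T_\pm(k),S_\pm(k)$, all continue analytically into the strip $\{|{\rm Im}\,k|<\gamma_1\}$, which contains every $ik_j$ because $\gamma_1>k_N$. On the real axis one has $A=T_+^{-1}$, $D=T_-^{-1}$, $B=S_-A$ and $C=S_+D$, so the scattering matrix (\ref{fl4}) determines $A,B,C,D$ on $\mathbb{R}$ and therefore, by analytic continuation, throughout the strip; in particular the finite values $B(ik_j)$ and $C(ik_j)$ are recovered even though $A(ik_j),D(ik_j)$ are singular. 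The eigenvalue parameters $\{k_j\}$ are then read off as the poles of $T_\pm(k)$ on $i\mathbb{R}^+$ (equivalently the zeros of $\det A$), exactly as in Step 1 of Theorem 2.1, and the residues $R_j^\pm$ of (\ref{fl5}) are computed from the scattering matrix.

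The heart of the argument is recovering $N_j^-$ from $R_j^-$. Mimicking (\ref{bgha}), I would first use the asymptotics (\ref{fl0}), (\ref{yyy}) together with the $x$-independence of the Wronskian of two solutions sharing the same $k^2$-value to evaluate the constant Wronskians $[f_+(-ik_j,x)^\dag;f_+(ik_j,x)]=-2k_j$ and $[f_-(-ik_j,x)^\dag;f_-(ik_j,x)]=2k_j$. Next, comparing with the definitions (\ref{fl1})--(\ref{fl2}), I would identify the cross-Wronskians with the already-recovered coefficient matrices, namely $[f_-(-ik_j,x)^\dag;f_+(ik_j,x)]=2k_j\,B(ik_j)$ and $[f_+(-ik_j,x)^\dag;f_-(ik_j,x)]=-2k_j\,C(ik_j)$. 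Applying the Wronskian $[f_-(-ik_j,x)^\dag;\,\cdot\,]$ to the identity $f_+(ik_j,x)R_j^-=i f_-(ik_j,x)N_j^-$ from (\ref{fl6}) and inserting these evaluations collapses the $x$-dependence and yields $N_j^-=-i\,B(ik_j)\,R_j^-$ (and symmetrically $N_j^+=-i\,C(ik_j)\,R_j^+$), the full-line analogue of the half-line formula (\ref{jja1}). Since $B(ik_j)$, $C(ik_j)$ and $R_j^\pm$ have all been recovered in the previous step, this determines the weight matrices and completes the reduction to \cite{NB1}.

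The step I expect to be delicate is precisely this recovery of the weight matrices, for two reasons. First, at $k=ik_j$ the matrices $A$ and $D$ degenerate, so one must argue that the numerator matrices $B,C$ remain analytic there and are genuinely reconstructible from the scattering matrix; this is what allows $B(ik_j),C(ik_j)$ to enter the final formula as finite, known quantities rather than as the ill-defined products $S_\mp A$ and $S_\pm D$ evaluated at a pole. Second, the correct choice of multiplying Jost factor and the bookkeeping of which conjugate spectral value occurs in each Wronskian must be carried out carefully so as to land on the clean constants $\mp 2k_j$ and the right coefficient matrix; an incorrect pairing reintroduces the singular factors $A(ik_j)$ or $D(ik_j)$ and stalls the argument. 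Everything else---the analytic continuation, the detection of the $k_j$, and the final appeal to \cite{NB1}---is routine given the machinery already developed for Theorem 2.1.
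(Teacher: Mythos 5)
Your proposal is correct and follows essentially the same route as the paper: reduce to the uniqueness theorem of \cite{NB1} for the left scattering data, read off the eigenvalues from the poles of $T_\pm(k)$ in the strip of analyticity, and recover the weight matrices via the Wronskian identity applied to (\ref{fl6}), arriving at exactly the paper's formula $N_j^-=-iB(ik_j)R_j^-$ of (\ref{fh10}). Your extra care about how $B(ik_j)$ (and $C(ik_j)$) are reconstructed by analytic continuation despite the degeneracy of $A$ and $D$ at $k=ik_j$ is a point the paper passes over silently, but it is a refinement of the same argument, not a different one.
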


\begin{remark}
Similar to the half line case, the same result is true for the compactly supported potential.
\end{remark}

\begin{proof}[Proof of Theorem 3.1]

It was shown in \cite{NB1} that the potential $V(x)$ satisfying (\ref{2})
is uniquely determined by
the left (or right) scattering data. We shall prove that the scattering matrix uniquely determine the bound state data (i.e., eigenvalues and weight matrices).

Firstly, the eigenvalues $\{-k_j^2\}_{j=1}^N$ can be found by observing  $T_\pm(k)$ when $k$ moves along the positive imaginary axis (if $k\to ik_j$ then there exists at least one element in $T_\pm(k)$ approaching $\infty$).

Next, let us show that the scattering matrix uniquely determine the left weight matrices. By virtue of (\ref{fl3}), we can obtain the matrices $A(k)$ and $B(k)$ from the scattering matrix. From (\ref{fl6}), we have
\begin{equation}\label{fl7}
\left\{ \begin{split}
  f_-(-ik_j,x)^\dag f_+'(ik_j,x)R_j^-=if_-(-ik_j,x)^\dag f_-'(ik_j,x)N_j^-,\\
    f_-'(-ik_j,x)^\dag f_+(ik_j,x)R_j^-=if_-'(-ik_j,x)^\dag f_-(ik_j,x)N_j^-,
 \end{split}\right.
\end{equation}
which implies
\begin{equation}\label{fl8}
 [ f_-(-ik_j,x)^\dag;f_+(ik_j,x)]R_j^-=i[f_-(-ik_j,x)^\dag;f_-(ik_j,x)]N_j^-.
\end{equation}
Using (\ref{fl0}) with $k=\pm ik_j$, we get
\begin{equation}\label{fl9}
[f_-(-ik_j,x)^\dag;f_-(ik_j,x)]=2k_jI_n.
\end{equation}
Together with (\ref{fl2}), (\ref{fl8}) and (\ref{fl9}), we have
\begin{equation}\label{fh10}
 -iB(ik_j)R_j^-=N_j^-.
\end{equation}
Since the matrix $R_j^-$ can be obtained from $T_(k)$ (see (\ref{fl5})), we conclude that the scattering matrix uniquely recover the left scattering data $\mathcal{S}_-$. The proof is complete.
\end{proof}
\begin{remark}
The proof of Theorem 3.1 provides a reconstruction algorithm for recovering the self-adjoint potential satisfying (\ref{fll}) from the scattering matrix (\ref{fl4}):

1) find the eigenvalues from $T_-(k)$ (or $T_+(k)$);

2) find the left weight matrices $N_j^-$ by (\ref{fh10});

3) use the left scattering data $\mathcal{S}_-$ to recover the potential by the method in \cite{NB1}.
\end{remark}

Now, we state the second main result in this section.

\begin{theorem}
Assume that the potential $V(x)$ satisfies (\ref{2}) and vanishes on $(-\infty,0)$, then the  reflection coefficient $S_-(k)$ uniquely determine the potential.
\end{theorem}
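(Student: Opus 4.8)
The plan is to reduce the problem on $\mathbb{R}$ to a Dirichlet inverse problem on $(0,\infty)$ and then invoke the known fact \cite{NB,FY} that the Weyl matrix determines the potential on the half line. The bridge between the two is an explicit Cayley-type identity expressing the left reflection coefficient $S_-(k)$ in terms of the logarithmic derivative of the Jost solution $f_+$ at the origin. First I would exploit the hypothesis $V\equiv 0$ on $(-\infty,0)$: there $f_-(k,x)$ solves the free equation and has the prescribed behavior (\ref{fl0}) as $x\to-\infty$, so it must coincide with the free solution, $f_-(k,x)=e^{-ikx}I_n$ for all $x\le 0$. In particular $f_-(-\bar k,x)^\dag=e^{-ikx}I_n$ and $f_-(\bar k,x)^\dag=e^{ikx}I_n$ for $x\le 0$, together with the corresponding derivatives.

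Next I would use the fact that the Wronskians in (\ref{fl1}) and (\ref{fl2}) are independent of $x$ to evaluate them at $x=0$. Substituting the free expressions for $f_-(\mp\bar k,0)^\dag$ and their derivatives gives
\begin{equation*}
A(k)=\frac{1}{2ik}\bigl[f_+'(k,0)+ik\,f_+(k,0)\bigr],\qquad B(k)=-\frac{1}{2ik}\bigl[f_+'(k,0)-ik\,f_+(k,0)\bigr].
\end{equation*}
Writing $M(k):=f_+'(k,0)\,f_+(k,0)^{-1}$ and factoring $f_+(k,0)$ out on the right, the left reflection coefficient (\ref{fl3}) becomes a Cayley-type transform of $M(k)$:
\begin{equation*}
S_-(k)=B(k)A(k)^{-1}=\bigl(ik\,I_n-M(k)\bigr)\bigl(ik\,I_n+M(k)\bigr)^{-1}.
\end{equation*}

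I would then invert this relation. Solving for $M(k)$ yields
\begin{equation*}
M(k)=ik\,\bigl(S_-(k)+I_n\bigr)^{-1}\bigl(I_n-S_-(k)\bigr),
\end{equation*}
so that $S_-(k)$ determines $M(k)$ on the real axis; here $S_-(k)+I_n=\bigl(A(k)+B(k)\bigr)A(k)^{-1}=f_+(k,0)\,A(k)^{-1}$ is invertible for real $k\ne 0$. Finally I would observe that $\Phi(k,x):=f_+(k,x)\,f_+(k,0)^{-1}$ is the Weyl solution of the Dirichlet problem on $(0,\infty)$, normalized by $\Phi(k,0)=I_n$ and satisfying $\Phi'(k,0)=M(k)$; hence $M(k)$ is precisely the Weyl matrix. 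By the result of \cite{NB,FY} the Weyl matrix uniquely determines the potential on $(0,\infty)$, and since $V\equiv 0$ on $(-\infty,0)$ is already known, the potential is recovered on all of $\mathbb{R}$.

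The main obstacle is the last identification step together with the invertibility bookkeeping. One must check that $M(k)$, as defined above, matches the normalization of the Weyl matrix used in \cite{NB,FY}, possibly up to a basis transform, and that the Cayley inversion is valid on a set rich enough to pin down the meromorphic Weyl matrix: this requires confirming that $f_+(k,0)$ and $A(k)$ (equivalently $S_-(k)+I_n$) are invertible for real $k\ne 0$ away from a discrete exceptional set, and that the finitely many exceptional points and the behavior at $k=0$ do not obstruct the analytic determination of $M(k)$. Establishing these invertibility and continuation facts carefully is the point that demands the most attention; the algebraic reduction itself is routine once the free form of $f_-$ on $(-\infty,0)$ is used.
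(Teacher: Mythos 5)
Your proposal is correct and follows essentially the same route as the paper: exploit $V\equiv 0$ on $(-\infty,0)$ to get $f_-(k,x)=e^{-ikx}I_n$ for $x\le 0$, express the Weyl matrix at $x=0$ in terms of $S_-(k)$, and invoke \cite{NB,FY}; the paper reaches the same identity slightly more directly by citing the representation $f_{0+}(k,x)=f_-(-k,x)+f_-(k,x)S_-(k)$ and reading off $f_{0+}(k,x)=e^{ikx}I_n+e^{-ikx}S_-(k)$ for $x\le 0$, whereas you rederive it from the Wronskian definitions of $A(k)$ and $B(k)$. Your Cayley-transform formula $M(k)=ik\bigl(S_-(k)+I_n\bigr)^{-1}\bigl(I_n-S_-(k)\bigr)$ is exactly the inverse of the paper's Weyl matrix $f_+(k,0)f_+'(k,0)^{-1}$ (note the paper's displayed relation contains a typo, $[I_n-ikS_-(k)]^{-1}$ in place of $\bigl[ik(I_n-S_-(k))\bigr]^{-1}$), so the normalization discrepancy you flag is harmless.
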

\begin{proof}
Denote
\begin{equation}\label{f11}
  f_{0\pm}(k,x)=  f_\pm(k,x)T_\pm(k),\quad x\in \mathbb{R}.
\end{equation}
It is known \cite{NB1} that
\begin{equation}\label{fl11}
  f_{0\pm}(k,x)=f_\mp(- k,x)+f_\mp(k,x)S_\mp(k),\quad x\in \mathbb{R}.
\end{equation}
Since $V(x)=0$ for $x<0$, we have 
\begin{equation}\label{fl12}
  f_{0+}(k,x)=e^{ikx}I_n+e^{-ikx}S_-(k),\quad x\le0.
\end{equation}
It follows from (\ref{f11}) and (\ref{fl12}) that
\begin{equation*}
  f_+(k,0)f_+'(k,0)^{-1}=[I_n+S_-(k)][I_n-ikS_-(k)]^{-1}.
\end{equation*}
 Note that the matrix $f_+(k,0)f_+'(k,0)^{-1}$ is the Weyl matrix, which uniquely determines the potential $V(x)$ on $(0,\infty)$ (see \cite{NB}).
Thus, the reflection coefficient $S_-(k)$ uniquely determine the potential $V(x)$ vanishing on $(-\infty,0)$. The proof is complete.
\end{proof}

\begin{corollary}
Assume that the potential $V(x)$ satisfies (\ref{2}) and vanishes on $(-\infty,0)$, then the matrix $f_{0+}(k,0)$ (or $f_{0+}'(k,0)$) uniquely determine the potential.
\end{corollary}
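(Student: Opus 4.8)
The plan is to reduce this corollary directly to Theorem 3.3 by showing that either $f_{0+}(k,0)$ or its derivative $f_{0+}'(k,0)$ already recovers the left reflection coefficient $S_-(k)$. The crucial input is the closed-form expression (\ref{fl12}), which is available precisely because $V$ vanishes on $(-\infty,0)$: for $x\le 0$ we have $f_{0+}(k,x)=e^{ikx}I_n+e^{-ikx}S_-(k)$. Since Theorem 3.3 asserts that $S_-(k)$ by itself determines the potential, it suffices to extract $S_-(k)$ from the boundary data at $x=0$.

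First I would simply evaluate (\ref{fl12}) at $x=0$, which gives
\[
  f_{0+}(k,0)=I_n+S_-(k),
\]
so that $S_-(k)=f_{0+}(k,0)-I_n$. Hence $f_{0+}(k,0)$ determines $S_-(k)$ on $\mathbb{R}$, and Theorem 3.3 then determines the potential. For the derivative version I would differentiate (\ref{fl12}) in $x$ and set $x=0$, obtaining
\[
  f_{0+}'(k,0)=ik\bigl(I_n-S_-(k)\bigr),
\]
whence $S_-(k)=I_n-(ik)^{-1}f_{0+}'(k,0)$ for $k\neq0$. Thus $f_{0+}'(k,0)$ recovers $S_-(k)$ on $\mathbb{R}\setminus\{0\}$, and the remaining value at $k=0$ is filled in by continuity of $S_-$. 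In either case we are reduced to Theorem 3.3.

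I do not expect a genuine obstacle here, since both identities are immediate consequences of the explicit formula (\ref{fl12}); the corollary is essentially a restatement of Theorem 3.3 through a change of data. The only minor point worth flagging is the exclusion of $k=0$ in the derivative case, where the factor $ik$ must be inverted, but this is harmless because $S_-$ is continuous (indeed analytic for $|\operatorname{Im}k|\le\gamma$), so its value on $\mathbb{R}\setminus\{0\}$ already fixes it everywhere.
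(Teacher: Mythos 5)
Your proposal is correct and matches the intended argument: the paper states this corollary without a separate proof precisely because it follows, as you show, by evaluating (\ref{fl12}) (or its $x$-derivative) at $x=0$ to read off $S_-(k)=f_{0+}(k,0)-I_n$ or $S_-(k)=I_n-(ik)^{-1}f_{0+}'(k,0)$, and then invoking Theorem 3.3. One cosmetic remark: under the hypotheses of Theorem 3.3 only (\ref{2}) is assumed, so the parenthetical claim that $S_-$ is analytic in a strip $|\mathrm{Im}\,k|\le\gamma$ is unwarranted there, but continuity of $S_-$ on $\mathbb{R}$ (which holds under (\ref{2})) already handles the excluded point $k=0$, so the argument stands.
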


\noindent {\bf Acknowledgments.}
The research work was supported in part by the National Natural Science Foundation of China (11171152 and 91538108) and
Natural Science Foundation of Jiangsu Province of China (BK 20141392).


\begin{thebibliography}{99}\footnotesize
\bibitem{AM} Z.S. Agranovich, V.A. Marchenko, The Inverse Problem of Scattering Theory, Gordon and Breach, New York,
1963.
\bibitem{AKV} T. Aktosun, M. Klaus,  C. van der Mee, Small-energy asymptotics of the scattering matrix for the matrix Schr\"{o}dinger equation
on the line, J. Math. Phys. 42 (2001), 4627.
\bibitem{AKW} T. Aktosun, M. Klaus, R. Weder, Small-energy analysis for the self-adjoint matrix Schr\"{o}dinger operator on the half
line, J. Math. Phys. 52 (2011), 102101.
\bibitem{AW} T. Aktosun, R. Weder, High-energy analysis and Levinson's theorem for the self-adjoint matrix Schr\"{o}dinger operator
on the half line, J. Math. Phys. 54 (2013), 012108.
\bibitem{ASU} T. Aktosun, P. Sacks,  M. Unlu, Inverse problems for selfadjoint Schr\"{o}dinger operators
on the half line with compactly supported potentials, J. Math. Phys. 56 (2015), 022106.
\bibitem{NB1} N. Bondarenko, Inverse scattering on the line for the matrix Sturm-Liouville equation, J. Differential Equations 262 (2017), 2073-2105.
\bibitem{NB} N. Bondarenko, An inverse spectral problem for the matrix
Sturm-Liouville operator on the half-line, Boundary Value Problems 15 (2015), 22pp.
\bibitem{CD} F. Calogero, A. Degasperis, Nonlinear evolution equations solvable by the inverse spectral transform II, Nuovo
Cimento B 39 (1) (1977).
\bibitem{FY} G. Freiling, V. Yurko, An inverse problem for the non-selfadjoint
matrix Sturm-Liouville equation on the half-line, J. Inv. Ill-Posed Problems 15 (2007), 785-798.
  \bibitem{FY1} G. Freiling, V.A. Yurko, Inverse Sturm-Liouville Problems and Their Applications,
NOVA Science Publishers, New York, 2001.
\bibitem{MH1} M.S. Harmer, Inverse scattering for the matrix Schr\"{o}dinger operator and Schr\"{o}dinger operator on graphs with general
self-adjoint boundary conditions, ANZIAM J. 44 (2002), 161-168.
\bibitem{MH3} M.S. Harmer, Inverse scattering on matrices with boundary conditions, J. Phys. A: Math. Gen. 38 (2005) 4875-4885.
\bibitem{KS1} V. Kostrykin, R. Schrader, Kirchhoff's rule for quantum wires. II: The inverse problem with possible applications to
quantum computers, Fortschr. Phys. 48 (2000), 703-716.
\bibitem{KS2} P. Kurasov, F. Stenberg, On the inverse scattering problem on branching graphs, J. Phys. A 35 (2002), 101-121.
\bibitem{PK} P. Kuchment, Quantum graphs. I. Some basic structures, Waves Random Media 14 (2004), S107-S128.
\bibitem{RM} A.G. Ramm, One-dimensional inverse scattering and spectral problems, CUBO a Mathematical Journal 6  (2004), 313-426.
\bibitem{RS} W. Rundell, P. Sacksb, On the determination of potentials without bound state data, J. Comput. Appl. Math. 55 (1994), 325-347.
\bibitem{XY} X.-C. Xu, C.-F. Yang, Solvability of the inverse scattering problem for the self-adjoint matrix Schr\"{o}dinger operator on the half line  (Preprint, arXiv:1703.01375v1 [math-ph]).
\end{thebibliography}
\end{document}